\newtheorem{proposition}{Proposition }[section]
\newtheorem{remark}{Remark }[section]
\theoremstyle{definition}
\def\ps@pprintTitle{%
	\let\@oddhead\@empty
	\let\@evenhead\@empty
	\def\@oddfoot{\reset@font\hfil\thepage\hfil}%
	\let\@evenfoot\@oddfoot}
\begin{document}
	\begin{frontmatter}
		\title{Adaptive Algorithm for Sparse Signal Recovery}
		\author[]{Fekadu L. Bayisa$^{*}$}
		\author[]{Zhiyong Zhou}
		\author[]{Ottmar Cronie}
		\author[]{and Jun Yu}
		\address{Department of Mathematics and Mathematical Statistics, Ume{\aa} University, Ume{\aa}, Sweden}
		\cortext[author] {Corresponding author.\\\textit{E-mail address:} fekadu.bayisa@umu.se}
		\begin{abstract}
			The development of compressive sensing in recent years has  given much attention to sparse signal recovery. In sparse signal recovery, spike and slab priors are playing a key role in inducing sparsity. The use of such priors, however, results in non-convex and mixed integer programming problems. Most of the existing algorithms to solve non-convex and mixed integer programming problems involve either  simplifying assumptions, relaxations or high computational expenses. In this paper, we propose a new adaptive alternating direction method of multipliers (AADMM) algorithm to directly solve the suggested non-convex and mixed integer programming problem. The algorithm is based on the one-to-one mapping property of the support and non-zero element of the signal. At each step of the algorithm, we update the support by either adding an index to it or removing an index from it and use the alternating direction method of multipliers to recover the signal corresponding to the updated support. Moreover, as opposed to the competing "adaptive sparsity matching pursuit" and "alternating direction method of multipliers" methods our algorithm can solve non-convex problems directly.  Experiments on synthetic data and real-world images demonstrated that the proposed AADMM algorithm provides superior performance and is computationally cheaper than the recently developed iterative convex refinement (ICR) and adaptive matching pursuit (AMP) algorithms.
		\end{abstract}
		\begin{keyword}
			Sparsity; adaptive algorithm; sparse signal recovery; spike and slab priors
		\end{keyword}
	\end{frontmatter}
	\pagenumbering{arabic}
	\section{Introduction}
	Over the past decades, sparse signal recovery from  fewer samples has received attention due to increasing impracticality.
	Sparsity is usually assumed for signal reconstruction and inverse problems \citep{TroppJA}. It is often assumed in compressed sensing theory \citep{DonohoDL}. The presence of sparsity  has several applications in signal recovery \citep{TroppJA, WrightSJ}, medical image reconstruction \citep{Chaari1, AndersenMR}, image classification  \citep{MousavSrinivas, SrinivasU}, and dictionary learning \citep{SadeghiM, SuoYandDaoM}.

	Lately, there has been a continuous increase in the volume of  data generated. For instance, medical imaging systems such as magnetic resonance imaging (MRI) may deliver multidimensional signals. According to \cite{LeeSHLeeYH}, however, magnetic resonance (MR) images may appear blurred due to the long scanning times, which in turn can cause patient inconvenience and unwarranted movements. One of the strategies to accelerate MR scanning is image reconstruction from reduced samples. The key feature in MR image reconstruction is the use of prior information about the signal through the compressibility or sparse representation under an appropriate sparsifying transform such as the wavelet transform and finite-differencing \citep{LustigMandDonohoDL}. Signal reconstruction by taking the sparsity into account is therefore of great interest.
	
	In general, sparse signal recovery  problems are ill-posed problems. However, the sparsity assumption makes sparse signal recovery possible from fewer measurements. Hence, regularizations are often required to promote the sparsity of the unknown signal. Several algorithms have been proposed to solve these regularized problems. The most recent algorithms are  adaptive matching pursuit \citep{VuTH} and iterative convex refinement \citep{MousaviHS}. Greedy algorithms \citep{TroppJA, MohimaniHand, MousaviAandMaleki}, Bayesian  methods \citep{JiSandXue2, DobigeonNandHero, LuX1}, and general sparse approximation algorithms such as SpaRSA and alternating direction method of multipliers  \citep{WrightSJ, BeckerSandJBobin, BoydSandParikh} have also been exploited for sparse signal recovery.

	In Bayesian sparse signal recovery, setting priors for the signal has played a key role in promoting sparsity and improving performance. Examples of priors are spike and slab \citep{Mitchell, LuX2, AndersenMR, MousavSrinivas, MousaviHS, VuTH}, Bernoulli-Gaussian \citep{LavielleM}, Bernoulli-exponential \citep{DobigeonNandHero}, Laplacian \citep{BabacanSandMolina}, and generalized Pareto \citep{CevherV2}. 
	
	\cite{ChaariTourneretBatatia} explored a fully Bayesian sparse signal recovery using Bernoulli-Laplacian priors and obtained sparser solutions in comparison to the Bernoulli-Gaussian prior setting.  The increased sparsity is due to the Laplacian term in the model. Motivated by this,  we concentrated on the setup of \cite{YenT} to promote sparsity by setting spike and slab priors for the signal. We proposed a Bernoulli-Laplace prior for the signal in order to induce sparsity in the signal recovery. Using the proposed prior, we develop a model that is an extension of least absolute shrinkage and selection operator (LASSO) \citep{TibshiraniR}. The developed model is a more general model and its optimization is known to be a non-convex and mixed integer programming problem where the existing solving methods involve simplifying or relaxation assumptions \citep{AndersenMR, YenT, SrinivasU, MousaviHS}. In this work, however,  we developed an adaptive algorithm to solve the optimization problem directly in its general form. 
	
	To the best of our knowledge, the competitors of our algorithm are adaptive matching pursuit (AMP) \citep{VuTH} and iterative convex refinement (ICR) \citep{MousaviHS}. Our algorithm and AMP are similar in that they involve two stages to solve the the optimization problems. The first stage deals with the support of the signal while the second stage is concerned with signal reconstruction for a given support of the signal. We perform these two stages iteratively to obtain the reconstructed signal and an updated support of the signal. These two stages hold for AMP and AADMM. The main difference is the involvement of $l_{1}$-norm in the second stage. In AMP, the second stage does not involve $l_{1}$-norm  and one can reconstruct the signal by forward and backward substitution. While AADMM involves $l_{1}$-norm in the second stage with a cost of that the problem can not be solved by forward and backward substitution. However, it is advantageous for a reconstruction problem to involve an $l_{1}$-norm as it enforces sparsity during sparse signal recovery. In contrast to AADMM, ICR uses a relaxation assumption to solve the non-convex and mixed integer programming problem. 
	
	The main contributions of our work are the following:
	\begin{enumerate*}[label={\arabic*)}]
		\item We formulate a sparse model using the maximum a posteriori estimation technique.
		\item We propose an adaptive alternating direction method of multipliers, hereinafter AADMM, to solve the non-convex and mixed integer programming problem  directly. A matching pursuit procedure and an alternating direction method of multipliers are combined to develop a computationally efficient algorithm to solve the optimization problem. \cite{wu2012adaptive}  developed an adaptive sparsity matching pursuit algorithm,  which used two nested stages to solve convex problems, for sparse signal reconstruction. \cite{BoydSandParikh} also devised an alternating direction method of multipliers to solve convex optimization problems. Our algorithm can solve non-convex problems directly, which is a clear added-value that the above-mentioned two algorithms don't have. \item We compare the performance of the proposed algorithm with the recently developed algorithms. For a given support of the signal, the proposed optimization problem involves an $l_{1}$-norm. Consequently, the proposed optimization problem is not differentiable. As a result, we can not exploit Cholesky decomposition based forward-backward substitution to solve the optimization problem \citep{ BoydSandParikh, abur1988parallel}.  In the most recent algorithm AMP, the optimization problem is differentiable for a given support of the signal and its differentiation is available in a simple closed-form to exploit Cholesky decomposition based forward-backward substitution for solving the resulting closed form problem \citep{VuTH}. Therefore, the most recent algorithm AMP can not be used to solve our proposed optimization problem. In this regard, the proposed algorithm is advantageous. Besides, the proposed optimization problem involves an $l_{1}$-norm and therefore we expect the signal to be recovered in a sparser form. Here we compare the performance of the proposed algorithm with the recently developed algorithms AMP and ICR. Notice that ICR can be used to solve the proposed optimization while AMP can not be exploited to solve the optimization problem. As signal reconstruction is concerned, however, we can still use AMP to reconstruct the signal generated according to the problem setting in this article in order to compare its performance with AADMM.   \item The developed adaptive algorithm can also be used to reconstruct both unconstrained and constrained (or non-negative) signals. \item We test our algorithm on both simulated data and real images. The results reveal the merits of the proposed AADMM algorithm. 
	\end{enumerate*} 
	
	The paper is organized as follows. In Section \ref{sect2}, we demonstrate the details of the proposed model. The developed adaptive algorithm for sparse signal recovery, evaluation method of the signal recovery, and  the results obtained are reported in Section \ref{AADMM}, Section \ref{ESR}, and  Section \ref{Resu}, respectively.  Finally, the conclusions and the future works are presented in Section \ref{conc}.
	\section{Problem Formulation}\label{sect2}
	Sparse signal recovery algorithms are used to recover a sparse signal $\mathbf{x}\in{\Bbb R}^{n\times 1}$ from observed measurements $\mathbf{y}\in{\Bbb R}^{m\times 1}$, where $m\ll n$. The basic model for sparse signal recovery is given by
	\begin{eqnarray}\label{ABOq}
	\mathbf{y}=\mathbf{A}\mathbf{x} + \boldsymbol\epsilon, 
	\end{eqnarray}
	where $\mathbf{A}\in{\Bbb R}^{m\times n}$ is a measurement matrix, and $\boldsymbol\epsilon\in{\Bbb R}^{m\times 1}$ is a Gaussian noise with a variance-covariance structure given by $\sigma^2\mathbf{I}$.  Here $\mathbf{I}$ is an $m\times m$ identity matrix. Since $m\ll n$, the inverse problem in equation \eqref{ABOq}, which occurs in a number of applications in the field of signal and image processing \citep{Chaari1, Pustelnik} is an ill-posed problem. 
	
	To allow sparse modeling and regularization, we assume a prior distribution for the unknown signal $\mathbf{x}=\left(x_{1}, x _{2}, \cdots, x _{n} \right)^{T}$. Using a Bernoulli random variable $\omega_{i}$, assume that
	\begin{eqnarray*}
		x_{i}\mid \omega_{i} = 1\sim  P_{1}\left(x_{i}\right),\quad x_{i}\mid w_{i}=0 \sim P_{0}\left( x_{i}\right) \quad\text{and}\quad P_{0}\left(0\right) = 1,
	\end{eqnarray*}
	where  $P_{1}\left(\cdot\right)$ and $P_{0}\left(\cdot\right)$ are probability distributions and $i = 1, 2, \cdots, n$. Here $\omega_{i}$ is used to control the structural sparsity of the signal $\mathbf{x}$. It is also exploited to form a mixture of distributions given by 
	\begin{eqnarray}\label{Og2}
	x_{i} \sim  \omega_{i} P_{1}\left(x_{i}\right) +\left(1 - \omega_{i}\right)P_{0}\left(x_{i}\right).
	\end{eqnarray}
	The mixture of distributions in equation \eqref{Og2} is an approximation of a spike and slab prior, which was proposed by  \cite{Mitchell}. The distribution $P_{1}\left(x_{i}\right)$ can be thought of as a slab while $P_{0}\left(x_{i}\right)$, which is an approximation of the Dirac delta function centered at the event $x_{i} = 0$, can be considered as a spike. 
	Since the signal $\mathbf{x}$ is expected to be sparse,  one can specify a Laplace distribution peaked at location parameter $\mu = 0$ as a slab. In Bayesian inference, spike and slab priors are the gold standard for inducing sparsity \citep{Titsiasx}.  Our interest here is to explore sparse signal recovery using the following three-level models:   
	\begin{eqnarray}\label{Abiy2018}
	\mathbf{y}\mid \mathbf{A}, \mathbf{x}, \sigma^{2} &\sim& \mathcal{N}\left(\cdot\mid \mathbf{A}\mathbf{x}, \sigma^{2}\mathbf{I}\right),\notag\\
	\mathbf{x}\mid \boldsymbol\omega, \sigma^{2},\lambda&\sim& \prod_{i=1}^{n}\left\lbrace \omega_{i}\mathbf{Laplace}\left(\cdot\mid 0, \frac{2\sigma^{2}}{\lambda}\right) + \left(1 - \omega_{i}\right)P_{0}\left(x_{i}\right)\right\rbrace ,\\
	\boldsymbol\omega\mid \boldsymbol\kappa&\sim&\prod_{i=1}^{n}\mathbf{Bernoulli}\left(\kappa_{i}\right),\notag
	\end{eqnarray}
	\noindent where  
	\begin{eqnarray*}
		\boldsymbol\omega = \left(\omega_{1}, \omega_{2}, \cdots, \omega_{n}\right)',\hspace{0.02in}
		\boldsymbol\kappa = \left(\kappa _{1}, \kappa _{2}, \cdots, \kappa _{n}\right)', \hspace{0.02in}\lambda>0,
	\end{eqnarray*}
	and the notations $\mathcal{N}\left(\cdot\right)$, $\mathbf{Laplace}\left(\cdot\right)$ and $\mathbf{Bernoulli}\left(\cdot\right)$ represent normal, Laplace and Bernoulli distributions, respectively. 
	
	A posterior maximization procedure can be used to induce sparsity in the Bayesian frame work \citep{CevherV1, CevherV2, MousaviHS, VuTH}.
	\begin{proposition}\label{DchdQwr} Assume that  $\sigma^{2}$ is  known. Using the model set-up in equation \eqref{Abiy2018} and a posterior maximization procedure, one can obtain the regularized optimization problem:
		\begin{eqnarray}\label{ABO5}
		\min\limits_{\mathbf{x},\boldsymbol\omega}\left\lbrace ||\mathbf{y}-\mathbf{A}\mathbf{x}||^{2}_{2} +\lambda||\mathbf{x}||_{1}+\sum_{i=1}^{n}\omega_{i}\gamma_{i}\right\rbrace,
		\end{eqnarray}
		where
		\begin{eqnarray}\label{ABO243}
		\gamma_{i} = 2\sigma^{2}\log \frac{4\sigma^{2}\left(1-\kappa_{i}\right)}{\lambda \kappa_{i}}.
		\end{eqnarray}
	\end{proposition}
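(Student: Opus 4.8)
The plan is to apply Bayes' theorem to the hierarchy in \eqref{Abiy2018} to obtain the joint posterior of $(\mathbf{x},\boldsymbol\omega)$ given $\mathbf{y}$, take the negative logarithm, and show that it agrees, up to an additive constant and multiplication by the positive number $2\sigma^2$, with the objective in \eqref{ABO5}. First I would write $p(\mathbf{x},\boldsymbol\omega\mid\mathbf{y})\propto p(\mathbf{y}\mid\mathbf{A},\mathbf{x},\sigma^2)\,p(\mathbf{x}\mid\boldsymbol\omega,\sigma^2,\lambda)\,p(\boldsymbol\omega\mid\boldsymbol\kappa)$, noting that since $\sigma^2$ is fixed the normalizing constant is irrelevant to the maximizer. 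The Gaussian likelihood gives $-\log p(\mathbf{y}\mid\mathbf{A},\mathbf{x},\sigma^2)=\frac{1}{2\sigma^2}\|\mathbf{y}-\mathbf{A}\mathbf{x}\|_2^2$ plus a constant.

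Next I would treat the prior $p(\mathbf{x}\mid\boldsymbol\omega)$ coordinatewise. Since each $\omega_i\in\{0,1\}$, the $i$-th mixture factor collapses: it equals the $\mathrm{Laplace}(0,2\sigma^2/\lambda)$ density $\frac{\lambda}{4\sigma^2}\exp(-\lambda|x_i|/(2\sigma^2))$ when $\omega_i=1$, and $P_0(x_i)$ when $\omega_i=0$. Writing the active set $S=\{i:\omega_i=1\}$, for the maximizing $\mathbf{x}$ the coordinates with $i\notin S$ are forced to $0$, because $P_0$ concentrates at $0$ with $P_0(0)=1$; hence $\|\mathbf{x}\|_1=\sum_{i\in S}|x_i|$ and, up to an additive constant, $-\log p(\mathbf{x}\mid\boldsymbol\omega)=\frac{\lambda}{2\sigma^2}\|\mathbf{x}\|_1-\bigl(\log\tfrac{\lambda}{4\sigma^2}\bigr)\sum_{i=1}^n\omega_i$. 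For the Bernoulli prior, $-\log p(\boldsymbol\omega\mid\boldsymbol\kappa)=-\sum_{i=1}^n[\omega_i\log\kappa_i+(1-\omega_i)\log(1-\kappa_i)]$, and peeling off the constant $-\sum_i\log(1-\kappa_i)$ leaves $\sum_{i=1}^n\omega_i\log\frac{1-\kappa_i}{\kappa_i}$.

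Adding the three contributions, the coefficient multiplying $\omega_i$ in the negative log-posterior is $\log\frac{1-\kappa_i}{\kappa_i}-\log\frac{\lambda}{4\sigma^2}=\log\frac{4\sigma^2(1-\kappa_i)}{\lambda\kappa_i}$, so that $-\log p(\mathbf{x},\boldsymbol\omega\mid\mathbf{y})=\frac{1}{2\sigma^2}\|\mathbf{y}-\mathbf{A}\mathbf{x}\|_2^2+\frac{\lambda}{2\sigma^2}\|\mathbf{x}\|_1+\sum_{i=1}^n\omega_i\log\frac{4\sigma^2(1-\kappa_i)}{\lambda\kappa_i}+c$ for a constant $c$ independent of $(\mathbf{x},\boldsymbol\omega)$. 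Multiplying by $2\sigma^2>0$ (which does not change the argminimizer) and discarding the constant yields exactly $\|\mathbf{y}-\mathbf{A}\mathbf{x}\|_2^2+\lambda\|\mathbf{x}\|_1+\sum_{i=1}^n\omega_i\gamma_i$ with $\gamma_i=2\sigma^2\log\frac{4\sigma^2(1-\kappa_i)}{\lambda\kappa_i}$, which is \eqref{ABO5}--\eqref{ABO243}.

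The only delicate point — and the step I would be most careful to justify — is the handling of the mixture of the continuous Laplace density with the "spike" $P_0$: one must argue that at the maximizer the off-support coordinates are pinned to zero, so the $P_0$ factors contribute only the constant $\log P_0(0)=0$, and that $P_0$ being merely an approximation of the Dirac mass does not alter this bookkeeping. Everything after that is routine manipulation of logarithms.
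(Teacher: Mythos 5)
Your proposal is correct and follows essentially the same route as the paper's own proof: apply Bayes' theorem to the three-level hierarchy, take the negative log-posterior, collapse the mixture factor to $P_{1}(x_{i})^{\omega_{i}}P_{0}(x_{i})^{1-\omega_{i}}$ using $\omega_{i}\in\{0,1\}$, and rescale by $2\sigma^{2}$ while discarding additive constants. You are in fact somewhat more careful than the paper at the one genuinely delicate step --- justifying that off-support coordinates are pinned to zero so that $\sum_{i}\omega_{i}|x_{i}|=\|\mathbf{x}\|_{1}$ and the $P_{0}$ factors contribute only $\log P_{0}(0)=0$ --- which the paper's derivation uses implicitly.
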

	\begin{proof}
		Let $\Omega= \left\lbrace \sigma^{2}, \lambda,\boldsymbol\kappa \right\rbrace$.  Using the hierarchical model, the joint posterior density of $\mathbf{x}$ and $\boldsymbol\omega$ is given by  
		\begin{eqnarray*}
			\log f\left(\mathbf{x}, \boldsymbol\omega\mid \mathbf{y}, \mathbf{A}, \Omega \right) &\propto&   \log\left\lbrace g\left(\mathbf{y}\mid\mathbf{A}, \mathbf{x}, \sigma^{2}\right)h\left(\mathbf{x}\mid\boldsymbol\omega, \sigma^{2}, \lambda\right)P\left(\boldsymbol\omega\mid \boldsymbol\kappa\right)\right\rbrace ,\\	
			&=&\log\Bigg\lbrace \mathcal{N}\left(\cdot\mid \mathbf{A}\mathbf{x}, \sigma^{2}\mathbf{I}\right)\Bigg\lbrace\prod_{i=1}^{n} \mathbf{Laplace}\left(\cdot\mid 0, \frac{2\sigma^{2}}{\lambda}\right)^{\omega_{i}}\times\\&&P_{0}\left( x_{i}\right)^{\left(1 - \omega_{i}\right)}\Bigg\rbrace \prod_{i=1}^{n}\mathbf{Bernoulli}\left(\kappa_{i}\right)\Bigg\rbrace,\\
			&=&-\frac{n}{2}\log 2\pi - \frac{n}{2}\log\sigma^{2}-\frac{1}{2\sigma^{2}}||\mathbf{y}-\mathbf{A}\mathbf{x}||^{2}_{2}+\\&&\sum_{i=1}^{n}\Bigg\lbrace-\omega_{i}\log\frac{4\sigma^{2}}{\lambda}-\omega_{i}\log\left(1-\kappa_{i}\right)+\omega_{i}\log \kappa_{i}\Bigg\rbrace+\\&&\sum_{i=1}^{n}\log\left(1-\kappa_{i}\right)-\frac{\lambda}{2\sigma^{2}}\sum_{i=1}^{n}\omega_{i}|x_{i}|.
		\end{eqnarray*}
		Assume that $\sigma^{2}$ is known. Then, we have the following optimization problem: 
		\begin{align*}
		\max\limits_{\mathbf{x}, \boldsymbol\omega}\left\lbrace 2\sigma^{2}\log f\left(\mathbf{x}, \boldsymbol\omega\mid \mathbf{y}, \mathbf{A}, \Omega \right)\right\rbrace=&\min\limits_{\mathbf{x},\boldsymbol\omega}\left\lbrace -2\sigma^{2}\log f\left(\mathbf{x}, \boldsymbol\omega\mid \mathbf{y}, \mathbf{A}, \Omega\right)\right\rbrace,\\ =&  \min\limits_{\mathbf{x},\boldsymbol\omega}\Bigg\lbrace n\sigma^{2}\log 2\pi + n\sigma^{2}\log\sigma^{2}+||\mathbf{y}-\mathbf{A}\mathbf{x}||^{2}_{2}+\notag\\&\lambda\sum_{i=1}^{n}\omega_{i}|x_{i}|-2\sigma^{2}\sum_{i=1}^{n}\log\left(1-\kappa_{i}\right)-2\sigma^{2}\times\\&\sum_{i=1}^{n}\Bigg\lbrace\omega_{i}\log\frac{4\sigma^{2}}{\lambda}+\omega_{i}\log\left(1-\kappa_{i}\right)-\omega_{i}\log \kappa_{i}\Bigg\rbrace\Bigg\rbrace,\\
		=&\min\limits_{\mathbf{x},\boldsymbol\omega}\left\lbrace ||\mathbf{y}-\mathbf{A}\mathbf{x}||^{2}_{2} +\lambda||\mathbf{x}||_{1}+\sum_{i=1}^{n}\omega_{i}\gamma_{i}\right\rbrace.
		\end{align*}
	\end{proof}
	\begin{remark}
		\upshape
		If all $\omega_{i}$ are identically distributed, then the last term of the optimization problem in equation \eqref{ABO5} becomes a regularized  $l_{0}$-norm of the signal. From equation \eqref{ABO243}, $\gamma_{i}$ can be negative for large $\kappa_{i}$ and increasing $\kappa_{i}$ decreases $\gamma_{i}$. This means that a strong belief in the presence of a non-zero value of the signal decreases the penalty value for the signal.
	\end{remark}
	\begin{remark}
		\upshape
		We have proposed a more general optimization problem in comparison to the optimization problem suggested by \cite{YenT}, \cite{LuX2}, \cite{LuX1}, and \cite{SrinivasU} since they simplified the optimization by assuming the same parameter $\kappa$ for all the Bernoulli random variables. This assumption allowed them to obtain an optimization problem that involves a regularized $l_{0}$-norm. In addition to the assumption of having the same parameter for all the Bernoulli random variables, \cite{YenT} exploited the $l_{2}$-norm of the signal instead of the $l_{1}$-norm in equation \eqref{ABO5}. The same problem setting as in equation \eqref{ABO5} has been utilised by \cite{MousaviHS} and \cite{VuTH} except that they exploited the $l_{2}$-norm  instead of the $l_{1}$-norm. This makes our approach advantageous as it contains an $l_{1}$-norm, which can enforce sparse signal recovery \citep{MalioutovD, JiaXandMen}. The problem in equation \eqref{ABO5} promotes  greater generality in capturing sparsity and it is also an extension of the existing LASSO method. Our optimization problem appears in signal recovery \citep{LuX1,LuX2, Chaari2}, regression \citep{TibshiraniR}, image classification \citep{SrinivasU}, and medical image reconstruction \citep{LustigMandDonohoDL}. Using the conventional optimization algorithms, we may not be able to solve the proposed optimization problem because it is a non-convex mixed-integer programming problem.
	\end{remark}
	\section{Adaptive Algorithm}\label{AADMM}
	In this section, we present the details of the methods used to solve the proposed optimization problem.  Let $\textbf{a}_{i}$ denote the $i^{th}$ column of the matrix $\mathbf{A}$. Each column of the matrix is assumed to have norm 1. That is, $||\textbf{a}_{i}||^{2}_{2} =1$, $i= 1, 2, \cdots, n$.  If we know the support $S= \left\lbrace  i \mid x_{i} \neq 0, i= 1, 2, \cdots, n\right\rbrace$  of the signal $\mathbf{x}$, then the optimization problem in equation \eqref{ABO5} is equivalent to:
	\begin{eqnarray}\label{ABO52}
	\min\limits_{\mathbf{x}_{S}}\left\lbrace ||\mathbf{y}-\mathbf{A}^{S}\mathbf{x}_{S}||^{2}_{2} +\lambda||\mathbf{x}_{S}||_{1} + \sum_{i\in S}\gamma_{i}\right\rbrace, 
	\end{eqnarray}
	\noindent where 
	\begin{eqnarray*}
		\mathbf{A}^{S}  = \left[\textbf{a}_{i}: i\in S\right],\quad
		\mathbf{x}_{S} = \left(x_{i}: i\in S\right)^{T}. 
	\end{eqnarray*}
	Since there is a one-to-one correspondence between $S$  and $\mathbf{x}_{S}$,  we may solve the optimization problem in equation \eqref{ABO5} by finding  the support $S$ and solving the optimization problem in equation \eqref{ABO52}. Based on this idea, an adaptive alternating direction method of multipliers (AADMM) is proposed for the signal recovery. AADMM uses a greedy method to update the support $S$. For a given support, it uses alternating direction method of multipliers (ADMM) to solve the optimization problem in equation \eqref{ABO52}. For each iteration of the algorithm, the greedy method updates the support $S$ either by absorbing one of the unselected indices into $S$ or by removing one of the elements from $S$. We select the option, that is either absorbing or removing an index, that decreases the objective  function of the optimization problem in equation \eqref{ABO5}.
	
	Based on the support $S$ of the signal, consider the following definitions:
	\begin{eqnarray}\label{ABO55}
	\delta_{S} = \sum_{i\in S}\gamma_{i} \hspace{0.06in}\text{and}\hspace{0.06in} g\left(S\right)   = \min\limits_{\mathbf{x}_{S}}\left\lbrace ||\mathbf{y}-\mathbf{A}^{S}\mathbf{x}_{S}||^{2}_{2} +\lambda||\mathbf{x}_{S}||_{1} + \delta_{S}\right\rbrace. 
	\end{eqnarray}
	For each iteration in the AADMM algorithm, we need to compute the following to update the support: 
	\begin{eqnarray}
	U_{S}  &=& \min\limits_{i \notin S}\left\lbrace  g\left(S\cup\left\lbrace i\right\rbrace \right)-g\left(S\right)\right\rbrace,\label{ABO53}\\ 
	i^{*}&=& \arg\min\limits_{\hspace{-0.2in}i \notin S}\left\lbrace  g\left(S\cup\left\lbrace i\right\rbrace \right)-g\left(S\right)\right\rbrace,\notag\\ 
	V_{S}  &=& \min\limits_{j \in S}\left\lbrace  g\left(S\setminus\left\lbrace j\right\rbrace \right)-g\left(S\right)\right\rbrace,\label{ABO15}\\
	j^{*}&=& \arg\min\limits_{\hspace{-0.2in}j \in S}\left\lbrace  g\left(S\setminus\left\lbrace j\right\rbrace \right)-g\left(S\right)\right\rbrace.\notag
	\end{eqnarray}
	Equations \eqref{ABO53} and \eqref{ABO15} represent the minimization of the change in the cost function by selecting one of
	unselected indices and  by removing one of the already selected indices, respectively. Based on $U_{S}$ and $V_{S}$, we have three possible cases. The first case is that if both $U_{S}$ and $V_{S}$ are not less than zero, adding or removing the indices can not decrease the cost function and we can stop the algorithm. If $U_{S} < V_{S}$, then  we update $S$ by absorbing the index $i^{*}$ and if $V_{S}<U_{S}$, then we update $S$ by removing the index $j^{*}$.  After each iteration,  the procedure guarantees that the cost function decreases and  thereby the algorithm converges eventually, being a monotone limit. The optimal support $\hat{S}$ is an element of the set $\left\lbrace S:  U_{S}\geq0, \quad V_{S}\geq0\right\rbrace.$  However, it is hardly practical to optimize $g\left(S\cup\left\lbrace i\right\rbrace \right)$ and $g\left(S\setminus\left\lbrace j\right\rbrace \right)$ for each $i$ and $j$. Therefore, we would like to use the upper bounds of $U_{S}$ and $V_{S}$ to significantly reduce the computational cost of $g\left(S\cup\left\lbrace i\right\rbrace \right)$ and $g\left(S\setminus\left\lbrace j\right\rbrace \right)$. For each iteration in the AADMM algorithm, the decision to add an index to $S$ or remove an index from $S$, that is to obtain an updated support, is based on the upper bounds of $U_{S}$ and $V_{S}$. Let $S^{u}$ denote the updated support for a given iteration in the AADMM algorithm. 
	Using the updated support $S^{u}$, we estimate $\mathbf{x}_{S^{u}}$ and compute the residual $\mathbf{r}_{S^{u}} =\mathbf{y}-\mathbf{A}^{S^{u}}\mathbf{x}_{S^{u}}$. This continues until convergence of the algorithm has been obtained. The following results are the key tools for developing the algorithm. Proposition \ref{Qeerro1} is used to initialise the support while propositions \ref{Qeerro2} and \ref{Qeerro3} are utilised for approximating the computationally expensive  $U_{S}$ and $V_{S}$. Once the initial value of the support is obtained, we estimate the signal using ADMM \citep{BoydSandParikh}. 
	\begin{proposition} \label{Qeerro1}
		If $\gamma_{i}<0$, then $i \in \hat{S}$.
	\end{proposition}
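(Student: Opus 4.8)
The plan is to argue by contradiction, exploiting the fact that enlarging the support by a coordinate fixed to zero leaves the data-fidelity and $\ell_{1}$ terms untouched and only changes the penalty term by the corresponding $\gamma_i$. First I would record that the optimal support is $\hat S=\arg\min_{S\subseteq\{1,\dots,n\}} g(S)$ and that it is well defined: there are only finitely many candidate supports, and for each fixed $S$ the map $\mathbf x_{S}\mapsto \|\mathbf y-\mathbf A^{S}\mathbf x_{S}\|_{2}^{2}+\lambda\|\mathbf x_{S}\|_{1}$ is continuous and coercive (the $\ell_{1}$ term alone drives it to $+\infty$), so the minimum defining $g(S)$ in \eqref{ABO55} is attained.

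Now suppose, for contradiction, that $\gamma_{i}<0$ but $i\notin\hat S$, and put $S'=\hat S\cup\{i\}$. Let $\mathbf x^{\star}_{\hat S}$ attain $g(\hat S)$. The key observation is that extending $\mathbf x^{\star}_{\hat S}$ to a vector indexed by $S'$ by assigning the value $0$ to the new coordinate $i$ produces a feasible point for the problem defining $g(S')$ whose data-fidelity term is unchanged — since the column $\mathbf a_{i}$ is multiplied by $0$, $\mathbf A^{S'}$ applied to this zero-extension equals $\mathbf A^{\hat S}\mathbf x^{\star}_{\hat S}$ — and whose $\ell_{1}$ term is likewise unchanged. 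Hence, using $\delta_{S'}=\delta_{\hat S}+\gamma_{i}$,
\[
g(S')\;\le\;\|\mathbf y-\mathbf A^{\hat S}\mathbf x^{\star}_{\hat S}\|_{2}^{2}+\lambda\|\mathbf x^{\star}_{\hat S}\|_{1}+\delta_{S'}\;=\;g(\hat S)+\gamma_{i}.
\]
Since $\gamma_{i}<0$, this gives $g(S')<g(\hat S)$, contradicting the optimality of $\hat S$. Therefore $i\in\hat S$, which is the claim.

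Finally, I would observe that the same one-line bound shows $g(\hat S\cup\{i\})-g(\hat S)\le\gamma_{i}$, so $U_{\hat S}\le\gamma_{i}<0$; thus the statement is also immediate from, and consistent with, the characterisation of the optimal support by $U_{\hat S}\ge 0$, $V_{\hat S}\ge 0$ in \eqref{ABO53}--\eqref{ABO15}. The argument is essentially mechanical; the only points needing care are the existence of a minimising support and the remark that re-optimising $\mathbf x_{S'}$ over all of $S'$ (rather than only over the zero-extension) can only lower the value further, so the displayed upper bound is valid. I do not expect a substantive obstacle beyond these bookkeeping checks.
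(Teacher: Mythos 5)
Your proof is correct and follows essentially the same route as the paper's: argue by contradiction, extend the minimiser of $g(\hat S)$ to a feasible point for the problem on $\hat S\cup\{i\}$, and conclude $g(\hat S\cup\{i\})\le g(\hat S)+\gamma_i<g(\hat S)$. In fact your version is cleaner, since you evaluate the bound directly at $x_i=0$, whereas the paper keeps $x_i$ general and then makes an unnecessary detour through an intermediate point $\bar x$ with $\gamma_i<h(\bar x)<0$ before reaching the same contradiction.
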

	\begin{proof}
		Assume that $i \notin \hat{S}$. Using equation \eqref{ABO55}, we have 
		\begin{eqnarray*}
			g\left(\hat{S}\cup\left\lbrace i\right\rbrace \right) \leq ||\mathbf{r}_{\hat{S}}-x_{i}\mathbf{a}_{i}||^{2}_{2} +\lambda||\mathbf{x}_{\hat{S}}||_{1} + \delta_{\hat{S}} + \lambda|x_{i}| + \gamma_{i}
			= g\left(\hat{S}\right) + x_{i}^{2} + \lambda|x_{i}| -2x_{i}\mathbf{r}^{T}_{\hat{S}}\mathbf{a}_{i} + \gamma_{i},
		\end{eqnarray*}
		which implies that 
		\begin{eqnarray*}
			g\left(\hat{S}\cup\left\lbrace i\right\rbrace \right) - g\left(\hat{S}\right)\leq x_{i}^{2} + \lambda|x_{i}| -2x_{i}\mathbf{r}^{T}_{\hat{S}}\mathbf{a}_{i} + \gamma_{i}.
		\end{eqnarray*}
		Let $h\left( x\right) =x^{2} + \lambda|x| -2x\mathbf{r}^{T}_{\hat{S}}\mathbf{a}_{i} + \gamma_{i}$.  We see that $h\left( x\right)$ is continuous, $\lim_{x\rightarrow\infty}h\left( x\right)= \infty$ and $h\left( 0\right) =  \gamma_{i}< 0$.  We observe that there exists a value $\bar{x}$ of $x_{i}$ such that $\gamma_{i}<h\left( \bar{x}\right)< 0$, whereby we have that $g\left(\hat{S}\cup\left\lbrace i\right\rbrace \right) - g\left(\hat{S}\right)< h\left( \bar{x}\right)< 0$. This implies that $\hat{S}$ can not be the optimal solution, which is a contradiction to the assumption. Therefore, $i$ must be in $\hat{S}$.
	\end{proof}
	\begin{proposition}\label{Qeerro2}
		$U_{S}$ given in equation \eqref{ABO53} satisfies
		\begin{eqnarray} 
		U_{S} \leq   
		\bar{U}_{S} = \min\limits_{i \notin S}\begin{cases}
		\gamma_{i}-\left(\mathbf{r}^{T}_{S}\mathbf{a}_{i}\right)^{2} +\lambda\mathbf{r}^{T}_{S}\mathbf{a}_{i}
		-\frac{\lambda^{2}}{4}, & \text{if} \hspace{0.1in} \mathbf{r}^{T}_{S}\mathbf{a}_{i}>\frac{\lambda}{2}, \\
		\gamma_{i}, & \text{if} \hspace{0.1in} |\mathbf{r}^{T}_{S}\mathbf{a}_{i}|\leq\frac{\lambda}{2},\label{ABO56q}\\
		\gamma_{i}-\left(\mathbf{r}^{T}_{S}\mathbf{a}_{i}\right)^{2} -\lambda\mathbf{r}^{T}_{S}\mathbf{a}_{i}
		-\frac{\lambda^{2}}{4}, & \text{if} \hspace{0.1in} \mathbf{r}^{T}_{S}\mathbf{a}_{i}<-\frac{\lambda}{2},
		\end{cases}
		\end{eqnarray}
	\end{proposition}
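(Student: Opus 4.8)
The plan is to mimic the argument used for Proposition \ref{Qeerro1}: bound $g\left(S\cup\{i\}\right)$ from above by feeding a deliberately suboptimal candidate into the minimization that defines it, and then optimize over the single new coordinate. Concretely, for a fixed $i\notin S$ I would keep $\mathbf{x}_{S}$ at the minimizer achieving $g\left(S\right)$ and append an arbitrary scalar $x$ in position $i$. Since $||\mathbf{a}_{i}||^{2}_{2}=1$, expanding $||\mathbf{r}_{S}-x\mathbf{a}_{i}||^{2}_{2}=||\mathbf{r}_{S}||^{2}_{2}-2x\,\mathbf{r}^{T}_{S}\mathbf{a}_{i}+x^{2}$ and using $\delta_{S\cup\{i\}}=\delta_{S}+\gamma_{i}$ gives
\[
g\left(S\cup\{i\}\right)-g\left(S\right)\leq x^{2}+\lambda|x|-2x\,\mathbf{r}^{T}_{S}\mathbf{a}_{i}+\gamma_{i}=:h_{i}\left(x\right)
\]
for every $x\in{\Bbb R}$, hence $g\left(S\cup\{i\}\right)-g\left(S\right)\leq\min_{x\in{\Bbb R}}h_{i}\left(x\right)$.

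Next I would minimize $h_{i}$ in closed form. Writing $\beta_{i}=\mathbf{r}^{T}_{S}\mathbf{a}_{i}$, the function $h_{i}$ is a piecewise quadratic: on $x\geq 0$ it equals $x^{2}+\left(\lambda-2\beta_{i}\right)x+\gamma_{i}$ and on $x\leq 0$ it equals $x^{2}-\left(\lambda+2\beta_{i}\right)x+\gamma_{i}$. A standard soft-thresholding computation then shows that the unconstrained minimizer is $\beta_{i}-\lambda/2$ when $\beta_{i}>\lambda/2$, is $\beta_{i}+\lambda/2$ when $\beta_{i}<-\lambda/2$, and is $0$ when $|\beta_{i}|\leq\lambda/2$; substituting back yields exactly the three branches of the inner expression of $\bar{U}_{S}$, namely $\gamma_{i}-\beta_{i}^{2}\pm\lambda\beta_{i}-\lambda^{2}/4$ in the two outer regimes and $\gamma_{i}$ in the middle one. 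Taking the minimum over all $i\notin S$ on both sides of $g\left(S\cup\{i\}\right)-g\left(S\right)\leq\min_{x}h_{i}\left(x\right)$ then gives $U_{S}\leq\bar{U}_{S}$, as claimed.

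The argument is essentially routine once the right suboptimal candidate is identified; the only place that needs care is the case analysis for the minimum of $h_{i}$, where one must verify that the candidate minimizer in each regime actually lies in the half-line on which the corresponding quadratic formula is valid (for instance $\beta_{i}-\lambda/2\geq 0$ exactly when $\beta_{i}\geq\lambda/2$), so that the three pieces patch together consistently and no interior minimum is overlooked. I would also remark that the boundary values $\beta_{i}=\pm\lambda/2$ are absorbed into the middle branch, the two relevant formulas coinciding there, which is why the non-strict inequality $|\mathbf{r}^{T}_{S}\mathbf{a}_{i}|\leq\lambda/2$ appears in that case.
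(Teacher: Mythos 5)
Your proposal is correct and follows essentially the same route as the paper: bound $g\left(S\cup\{i\}\right)-g\left(S\right)$ by the single-coordinate function $x^{2}+\lambda|x|-2x\,\mathbf{r}^{T}_{S}\mathbf{a}_{i}+\gamma_{i}$, minimize over $x$, and then take the minimum over $i\notin S$. The only difference is that the paper leaves the closed-form soft-thresholding minimization implicit, whereas you carry it out explicitly (and correctly), including the boundary cases $\mathbf{r}^{T}_{S}\mathbf{a}_{i}=\pm\lambda/2$.
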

	\begin{proof}
		Since
		\begin{eqnarray*}
			g\left(S\cup\left\lbrace i\right\rbrace \right) - g\left(S\right)\leq x_{i}^{2} + \lambda|x_{i}| -2x_{i}\mathbf{r}^{T}_{S}\mathbf{a}_{i} + \gamma_{i},
		\end{eqnarray*}
		for every $x_{i}$, we have that 
		\begin{eqnarray*}
			g\left(S\cup\left\lbrace i\right\rbrace \right) - g\left(S\right)\leq \min\limits_{x_{i}}\left\lbrace x_{i}^{2} + \lambda|x_{i}| -2x_{i}\mathbf{r}^{T}_{S}\mathbf{a}_{i} + \gamma_{i}\right\rbrace .
		\end{eqnarray*}
		This implies that 
		\begin{eqnarray}\label{Qeerro92}
		\min\limits_{i \notin S}\left\lbrace g\left(S\cup\left\lbrace i\right\rbrace \right) - g\left(S\right)\right\rbrace \leq  \min\limits_{i \notin S}\left\lbrace \min\limits_{x_{i}}\left\lbrace x_{i}^{2} + \lambda|x_{i}| -2x_{i}\mathbf{r}^{T}_{S}\mathbf{a}_{i} + \gamma_{i}\right\rbrace\right\rbrace.
		\end{eqnarray}
		We obtain the result in equation \eqref{ABO56q} from equation \eqref{Qeerro92}.
	\end{proof}
	\begin{proposition}\label{Qeerro3}
		For a support $S$,  $V_{S}$ given in equation \eqref{ABO15} satisfies 
		\begin{eqnarray}\label{ABO58} 
		V_{S} \leq \bar{V}_{S} = \min\limits_{j \in S}\left\lbrace x_{j}^{2} - \lambda|x_{j}| + 2x_{j}\mathbf{r}^{T}_{S}\mathbf{a}_{j} - \gamma_{j}\right\rbrace,
		\end{eqnarray}
		where $j\in S$ and $x_{j}$ is the element of the vector $\mathbf{x}_{S}$.
	\end{proposition}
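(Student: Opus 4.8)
The plan is to mirror the argument used for Proposition~\ref{Qeerro2}, but now by \emph{removing} an index from $S$ rather than adding one. Fix $j \in S$ and let $\mathbf{x}_{S}$ denote a minimizer attaining $g\left(S\right)$ in equation~\eqref{ABO55}, so that $\mathbf{r}_{S} = \mathbf{y} - \mathbf{A}^{S}\mathbf{x}_{S}$. The key observation is that the subvector $\mathbf{x}_{S\setminus\left\lbrace j\right\rbrace}$ obtained by deleting the $j$-th entry of $\mathbf{x}_{S}$ is a feasible point for the minimization defining $g\left(S\setminus\left\lbrace j\right\rbrace\right)$; hence evaluating that objective at this point produces an upper bound on $g\left(S\setminus\left\lbrace j\right\rbrace\right)$, and no re-optimization is required for the inequality direction we want.

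Next I would rewrite the three pieces of that objective in terms of the quantities attached to $S$. Since $\mathbf{A}^{S}\mathbf{x}_{S} = \mathbf{A}^{S\setminus\left\lbrace j\right\rbrace}\mathbf{x}_{S\setminus\left\lbrace j\right\rbrace} + x_{j}\mathbf{a}_{j}$, the residual satisfies $\mathbf{y} - \mathbf{A}^{S\setminus\left\lbrace j\right\rbrace}\mathbf{x}_{S\setminus\left\lbrace j\right\rbrace} = \mathbf{r}_{S} + x_{j}\mathbf{a}_{j}$, and expanding the square while using $||\mathbf{a}_{j}||^{2}_{2} = 1$ gives $||\mathbf{r}_{S} + x_{j}\mathbf{a}_{j}||^{2}_{2} = ||\mathbf{r}_{S}||^{2}_{2} + x_{j}^{2} + 2x_{j}\mathbf{r}^{T}_{S}\mathbf{a}_{j}$. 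Likewise $||\mathbf{x}_{S\setminus\left\lbrace j\right\rbrace}||_{1} = ||\mathbf{x}_{S}||_{1} - |x_{j}|$ and, directly from $\delta_{S} = \sum_{i\in S}\gamma_{i}$, we have $\delta_{S\setminus\left\lbrace j\right\rbrace} = \delta_{S} - \gamma_{j}$.

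Combining these, the upper bound on $g\left(S\setminus\left\lbrace j\right\rbrace\right)$ reorganizes into $g\left(S\right) + x_{j}^{2} - \lambda|x_{j}| + 2x_{j}\mathbf{r}^{T}_{S}\mathbf{a}_{j} - \gamma_{j}$, that is, $g\left(S\setminus\left\lbrace j\right\rbrace\right) - g\left(S\right) \leq x_{j}^{2} - \lambda|x_{j}| + 2x_{j}\mathbf{r}^{T}_{S}\mathbf{a}_{j} - \gamma_{j}$ for every $j \in S$. Taking the minimum over $j \in S$ on both sides yields $V_{S} \leq \bar{V}_{S}$, as claimed. I do not anticipate a genuine obstacle here; the only points needing care are the bookkeeping that $g\left(S\right)$ is actually attained at $\mathbf{x}_{S}$ (so that its restriction to $S\setminus\left\lbrace j\right\rbrace$ is a legitimate feasible candidate) and tracking the sign of the $\gamma_{j}$ term when the penalty set $\delta$ loses an index.
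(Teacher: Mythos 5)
Your proposal is correct and is essentially the paper's own argument: both rest on the observation that the restriction of the optimal $\mathbf{x}_{S}$ to $S\setminus\left\lbrace j\right\rbrace$ is feasible for the problem defining $g\left(S\setminus\left\lbrace j\right\rbrace\right)$, followed by the expansion $||\mathbf{r}_{S}+x_{j}\mathbf{a}_{j}||^{2}_{2}=||\mathbf{r}_{S}||^{2}_{2}+x_{j}^{2}+2x_{j}\mathbf{r}^{T}_{S}\mathbf{a}_{j}$ using $||\mathbf{a}_{j}||^{2}_{2}=1$. The paper merely presents the same computation in the opposite order (decomposing $g\left(S\right)$ and then bounding below by $g\left(S\setminus\left\lbrace j\right\rbrace\right)$), so no substantive difference exists.
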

	\begin{proof}
		For any $j\in S$,
		\begin{eqnarray}
		g\left(S\right) &=& ||\mathbf{r}_{S}||^{2}_{2} +\lambda||\mathbf{x}_{S}||_{1} + \delta_{S},\notag \\
		&=&||\mathbf{r}_{S}-x_{j}\mathbf{a}_{j}+x_{j}\mathbf{a}_{j}||^{2}_{2} +\lambda||\mathbf{x}_{S\setminus\left\lbrace j\right\rbrace }||_{1} + \delta_{S\setminus\left\lbrace j\right\rbrace }+ \lambda|x_{j}| + \gamma_{j},\notag\\
		&=&||\mathbf{r}_{S\setminus\left\lbrace j\right\rbrace }||^{2}_{2} +\lambda||\mathbf{x}_{S\setminus\left\lbrace j\right\rbrace }||_{1} + \delta_{S\setminus\left\lbrace j\right\rbrace }-x_{j}^{2}-2x_{j}\mathbf{a}^{T}_{j}\mathbf{r}_{S} + \lambda|x_{j}| + \gamma_{j},\notag\\
		&\ge& g\left({S\setminus\left\lbrace j\right\rbrace }\right)-x_{j}^{2}-2x_{j}\mathbf{a}^{T}_{j}\mathbf{r}_{S} + \lambda|x_{j}| + \gamma_{j}. \label{ABO57}
		\end{eqnarray}
		Equation \eqref{ABO57} implies that 
		\begin{eqnarray*}
			g\left({S\setminus\left\lbrace j\right\rbrace }\right)-g\left(S\right)\le x_{j}^{2} + 2x_{j}\mathbf{a}^{T}_{j}\mathbf{r}_{S} - \lambda|x_{j}|  - \gamma_{j},
		\end{eqnarray*}
		which holds for every $j\in S$ and we proved that equation \eqref{ABO58} holds.
	\end{proof}
	We exploit the computationally cheap $\bar{U}_{S}$ and $\bar{V}_{S}$  to update the support and we use the updated support to update the estimation of the signal. This procedure continues until we obtain the optimal support and signal. We have summarised the estimation procedure in Algorithm \ref{Oromia202}.
	\newcommand{\sfunction}[1]{\textsf{\textsc{#1}}}
	\algrenewcommand\algorithmicforall{\textbf{foreach}}
	\algrenewcommand\algorithmicindent{.8em}
	\begin{algorithm}[!h]
		\caption{\textbf{AADMM} Algorithm}\label{Oromia202}
		\begin{algorithmic}[1]	
			\State Inputs: $\mathbf{y}$, $\mathbf{A}$, $\lambda$, $\boldsymbol\gamma$,
			\State Initialise the support: $S=\left\lbrace i \mid \gamma_{i}<0\right\rbrace$,
			\While {true}
			\State Solve $\mathbf{x}_{S}$: \textbf{ADMM}, \label{AB7}
			\State Update the residual: $\mathbf{r}_{S} = \mathbf{y}-\mathbf{A}^{S}\mathbf{x}_{S}$,
			\State Compute $\left[\bar{U}_{S}, i^{*}\right]$ \label{AB6}, where 
			\begin{eqnarray}   
			\bar{U}_{S} = \min\limits_{i \notin S}\begin{cases}
			\gamma_{i}-\left(\mathbf{r}^{T}_{S}\mathbf{a}_{i}\right)^{2} +\lambda\mathbf{r}^{T}_{S}\mathbf{a}_{i}
			-\frac{\lambda^{2}}{4}, & \text{if} \hspace{0.1in} \mathbf{r}^{T}_{S}\mathbf{a}_{i}>\frac{\lambda}{2}, \\
			\gamma_{i}, & \text{if} \hspace{0.1in} |\mathbf{r}^{T}_{S}\mathbf{a}_{i}|\leq\frac{\lambda}{2},\label{ABO56906}\\
			\gamma_{i}-\left(\mathbf{r}^{T}_{S}\mathbf{a}_{i}\right)^{2} -\lambda\mathbf{r}^{T}_{S}\mathbf{a}_{i}
			-\frac{\lambda^{2}}{4}, & \text{if} \hspace{0.1in} \mathbf{r}^{T}_{S}\mathbf{a}_{i}<-\frac{\lambda}{2},
			\end{cases}
			\end{eqnarray}
			\begin{eqnarray*}  	
				i^{*}= \arg\min\limits_{\hspace{-0.2in}i \notin S}
				\begin{cases}
					\gamma_{i}-\left(\mathbf{r}^{T}_{S}\mathbf{a}_{i}\right)^{2} +\lambda\mathbf{r}^{T}_{S}\mathbf{a}_{i}
					-\frac{\lambda^{2}}{4}, & \text{if} \hspace{0.1in} \mathbf{r}^{T}_{S}\mathbf{a}_{i}>\frac{\lambda}{2}, \\
					\gamma_{i}, & \text{if} \hspace{0.1in} |\mathbf{r}^{T}_{S}\mathbf{a}_{i}|\leq\frac{\lambda}{2},\\
					\gamma_{i}-\left(\mathbf{r}^{T}_{S}\mathbf{a}_{i}\right)^{2} -\lambda\mathbf{r}^{T}_{S}\mathbf{a}_{i}
					-\frac{\lambda^{2}}{4}, & \text{if} \hspace{0.1in} \mathbf{r}^{T}_{S}\mathbf{a}_{i}<-\frac{\lambda}{2},
				\end{cases}
			\end{eqnarray*}
			\State Compute $\left[\bar{V}_{S}, j^{*}\right]$ , where 
			\begin{eqnarray*}
				\bar{V}_{S} = \min\limits_{j \in S}\left\lbrace x_{j}^{2} - \lambda|x_{j}| + 2x_{j}\mathbf{r}^{T}_{S}\mathbf{a}_{j} - \gamma_{j}\right\rbrace,
			\end{eqnarray*}
			\begin{eqnarray*}
				j^{*}= \arg\min\limits_{\hspace{-0.2in}j \in S}\left\lbrace x_{j}^{2} - \lambda|x_{j}| + 2x_{j}\mathbf{r}^{T}_{S}\mathbf{a}_{j} - \gamma_{j}\right\rbrace,
			\end{eqnarray*}
			\State Decide: 
			\If{$\min\left\lbrace \bar{U}_{S}, \bar{V}_{S}\right\rbrace \ge 0$,} break the while loop,
			\Else\If{$ \bar{U}_{S} < \bar{V}_{S}$,} 
			\State Insert index: $S = S\cup\left\lbrace i^{*} \right\rbrace$, 
			\Else
			\State Remove index: $S = S\setminus\left\lbrace j^{*} \right\rbrace$
			\EndIf
			\EndIf
			\EndWhile
			\State Outputs: $S \rightarrow \hat{\boldsymbol\omega}$ and $\mathbf{x}_{S} \rightarrow \hat{\mathbf{x}}$.
		\end{algorithmic}
	\end{algorithm}
	
	The optimization problem in equation \eqref{ABO5} with non-negative constraint on the sparse signal can also be solved by AADMM. For the non-negative constraint, we need to modify steps \ref{AB7} and \ref{AB6} of Algorithm \ref{Oromia202}. For step \ref{AB7}, we need to modify the ADMM algorithm provided by \cite{BoydSandParikh}. That is, we replace the soft-thresholding operator  $S_{\beta}\left( \Lambda \right)$ in the ADMM algorithm by
	\begin{eqnarray*}  
		\max\left\lbrace 0, S_{\beta}\left(\Lambda\right)\right\rbrace,
	\end{eqnarray*}
	where the soft-thresholding operator is defined by
	\begin{eqnarray*}  
		S_{\beta}\left(\Lambda\right)=
		\begin{cases}
			\Lambda-\beta, &  \Lambda>\beta ,\\
			0, & |\Lambda|\le \beta,\\
			\Lambda+\beta, &\Lambda<\beta.
		\end{cases}
	\end{eqnarray*}
	Furthermore, we need to substitute $\bar{U}_{S}$ of step \ref{AB6} (that is equation \eqref{ABO56906}) by
	\begin{eqnarray*}  
		\bar{U}_{S} =\min\limits_{i \notin S}\left\lbrace \left( \max\left\lbrace 0, \mathbf{r}^{T}_{S}\mathbf{a}_{i}-\frac{\lambda}{2}\right\rbrace \right)^{2} +\lambda\max\left\lbrace 0, \mathbf{r}^{T}_{S}\mathbf{a}_{i}-\frac{\lambda}{2}\right\rbrace-2\max\left\lbrace 0, \mathbf{r}^{T}_{S}\mathbf{a}_{i}-\frac{\lambda}{2}\right\rbrace \mathbf{r}^{T}_{S}\mathbf{a}_{i}+\gamma_{i}\right\rbrace.
	\end{eqnarray*}
	The sparse signal recovery performance of the ICR algorithm \citep{MousaviHS} have been compared with the algorithms: majorization-minimization algorithm \citep{YenT}, FOCUSS \citep{gorodnitsky1997sparse}, expectation propagation approach for spike and slab recovery \citep{hernandez2015expectation}, and Variational Garrote \citep{kappen2014variational}. Since the ICR algorithm performs better than these algorithms \citep{MousaviHS} and as it can also be used to solve our proposed optimization problem, we choose to compare the sparse signal recovery performance of our algorithm with the recent ICR algorithm. Although AMP can not be used to solve our optimization problem, we can still apply AMP to the signal generated according to our model setup to  compare its performance with the proposed AADMM as AMP outperforms ICR and other algorithms, see \cite{VuTH}.  
	
	\section{Evaluation of the signal recovery}\label{ESR}
	Mean squared error (MSE) is used to evaluate the closeness of the recovered signal to the ground truth. Besides, we utilise the support match level (SML)  to measure how much the support of the recovered signal matches that of the support of the ground truth. We also study the effectiveness of the AADMM algorithm in comparison to the most recent ICR and AMP algorithms using MSE, SML, computational time (CT), objective function value (OFV), and sparsity level (SL). Recall that we only compare our algorithm to the ICR and AMP algorithms since they have the best performance among the competing algorithms, see \cite{MousaviHS} and  \cite{VuTH}. 
	\section{Results}\label{Resu}
	In this section, we present numerical results of our sparse signal recovery algorithm. In comparison to the ICR and AMP algorithms, we also demonstrate the worthiness of the AADMM algorithm. 
	\subsection{Simulation results for unconstrained signals}
	Our simulation setup  for sparse signal recovery was as in \cite{YenT, BeckAandTeboulleM}. We have used  $\lambda=2\times 10^{-4}$, $\sigma^{2}= 3.24\times10^{-4}$, and randomly  generated a Laplacian sparse vector $\mathbf{x}\in{\Bbb R}^{512\times 1}$ with 30 non-zeros.  Using $\mathbf{x}$, an additive Gaussian noise with variance $\sigma^{2}$ and a randomly generated Gaussian matrix $\mathbf{A} \in{\Bbb R}^{128\times 512}$,  we obtained an observation vector $\mathbf{y}\in{\Bbb R}^{128\times 1}$ according to equation \eqref{ABOq}. Using 500 different trials for $\mathbf{A}$,  $\mathbf{x}$, and $\epsilon$, the averages of the evaluation results are presented in Table \ref{tab:AADMM1}. 
	\begin{table}[H]
		\normalsize\caption{Comparison of AADMM, AMP, and ICR algorithms for sparse signal recovery using a true sparsity level = 30. We exploited OFV, MSE, SML, SL, and CT to compare the performance of the algorithms.}
		\label{tab:AADMM1}
		\centering
		\begin{tabular}{c|ccccc}
			\hline \hline
			Method&OFV&MSE&SML(\%)&SL&CT(S)\\\hline
			AADMM& 0.058&1.44$\times 10^{-4}$&96.256& 27.394 &0.024\\
			AMP & 0.042&1.75$\times 10^{-4}$& 95.395&32.176&0.122\\
			ICR & 0.154&1.06$\times 10^{-3}$&92.950&36.066 &5.200\\\hline\hline
		\end{tabular}
	\end{table}
It can be seen from Table \ref{tab:AADMM1} that AADMM outperforms ICR in several aspects. In terms of OFV, we see that the signal recovery using AADMM achieved a lower OFV. The result in  the table also revealed that the AADMM solution is closer to the ground truth in terms of MSE. Besides, AADMM is computationally faster than ICR. We also utilised SML to measure how much the support of the recovered signal matches that of the support of the ground truth. It is clear from the table that AADMM provides sparser solution and a higher SML (96.256\%) than ICR. 
	
In comparison to AMP,  AADMM achieved a better result in recovering a signal closer to the ground truth. Moreover, AADMM is computationally faster and provided sparser solution. However, AMP achieved a lower OFV as compared to AADMM which may be due to AADMM's sparser signal recovery property.
		
Using different sparsity levels, we evaluated the algorithms and presented their corresponding evaluation results of the sparse signal recovery problem in Figure \ref{Or21}. 
	\begin{figure}[H]
		\centering
		\includegraphics[height=3cm, width=7cm]{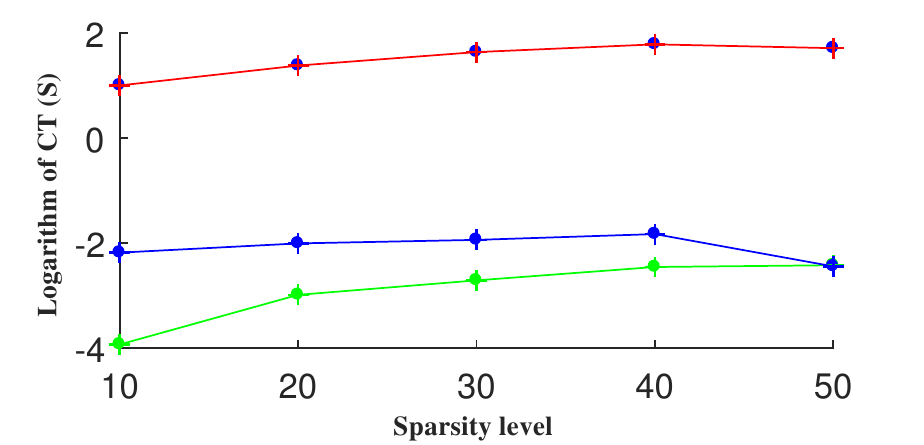}
		\includegraphics[height=3cm, width=7cm]{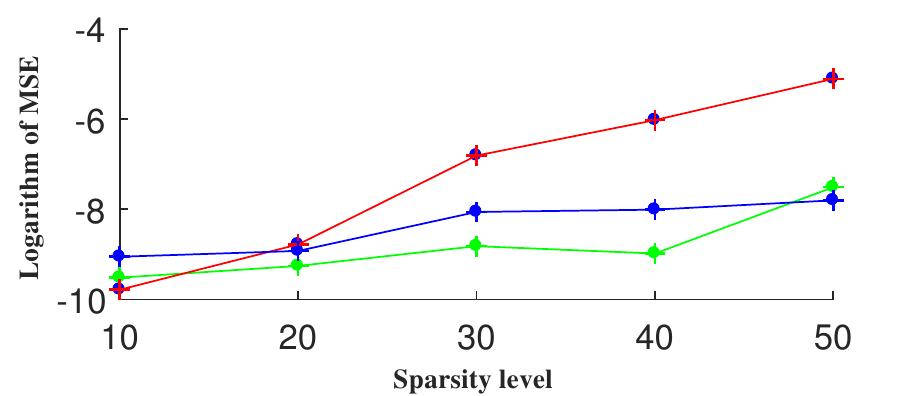}\\
		\includegraphics[height=3cm, width=7cm]{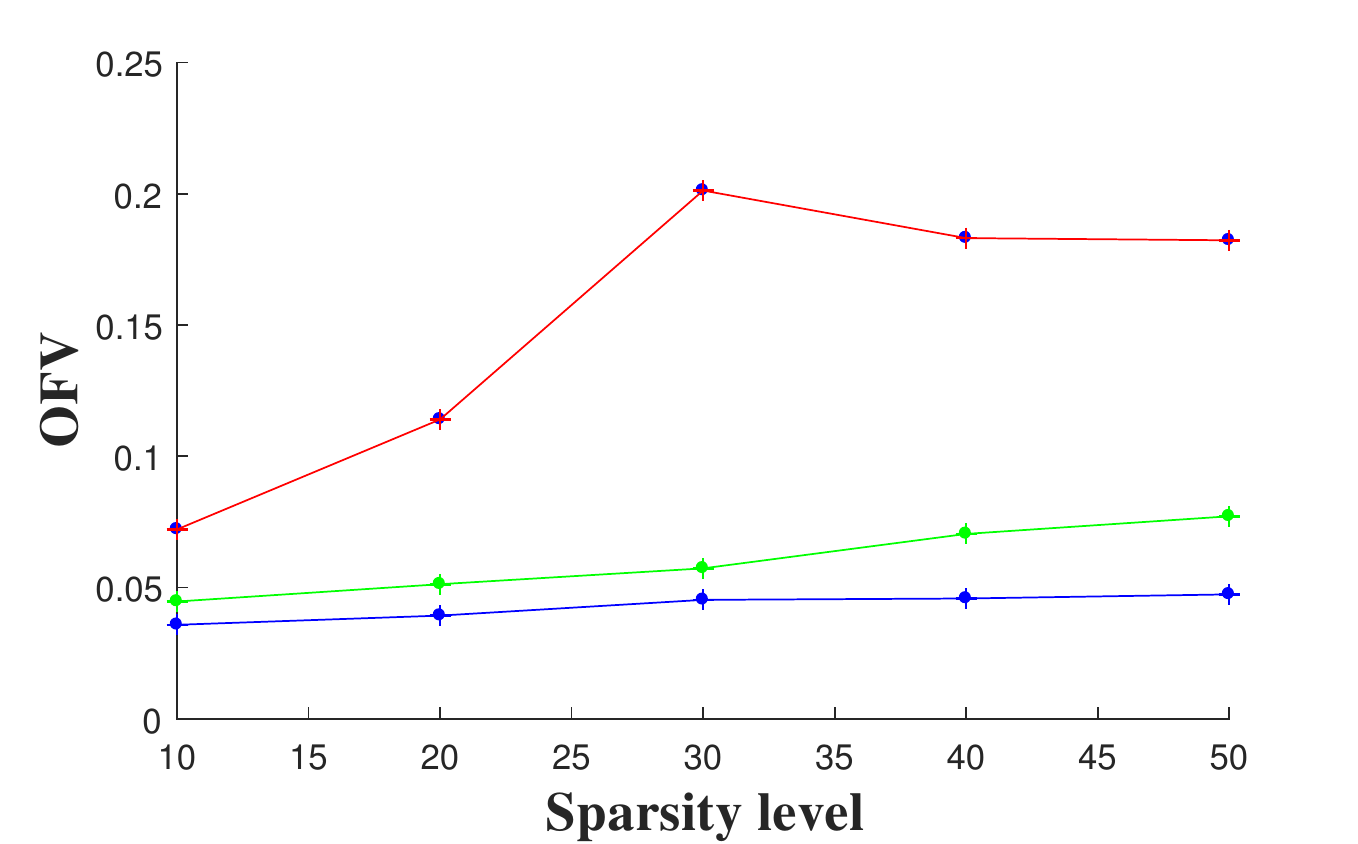}
		\includegraphics[height=3cm, width=7cm]{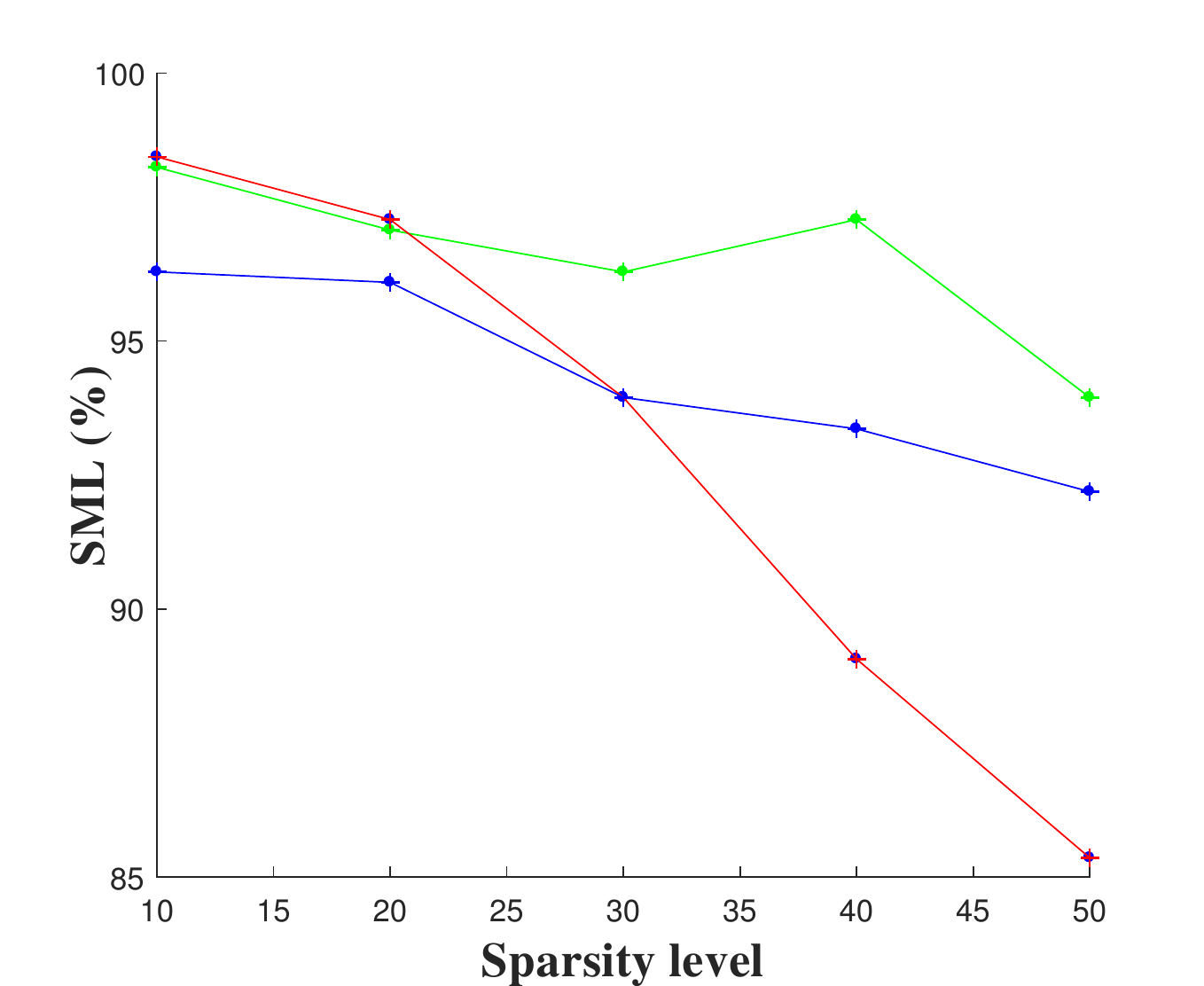}
		\normalsize\caption{Evaluation of the sparse signal recovery for different sparsity levels using AADMM (green), AMP (blue)  and ICR (red). CT, OFV, MSE, and SML are used to evaluate the performance of the algorithms over different sparsity levels. The computational times and MSEs are transformed by logarithm to make their plots visible.}
		\label{Or21}
	\end{figure}
	The figure shows that our method consistently outperforms ICR and AMP in terms of SML and MSE. It should be noted that AMP achieved a lower OFV in comparison to AADMM, which may be due to the sparser signal recovery property of AADMM and the smoothing term in the optimization problem of AMP. It is worth emphasising that AMP and ICR are also computationally more costly than AADMM.
	
	Figure \ref{Or223} presents the assessment of the sparse signal  recovery problem using different noise levels. We utilised  MSE and SML to evaluate the performance of the algorithms for different noise levels.
	\begin{figure}[H]
		\centering
		\includegraphics[height=4cm, width=7cm]{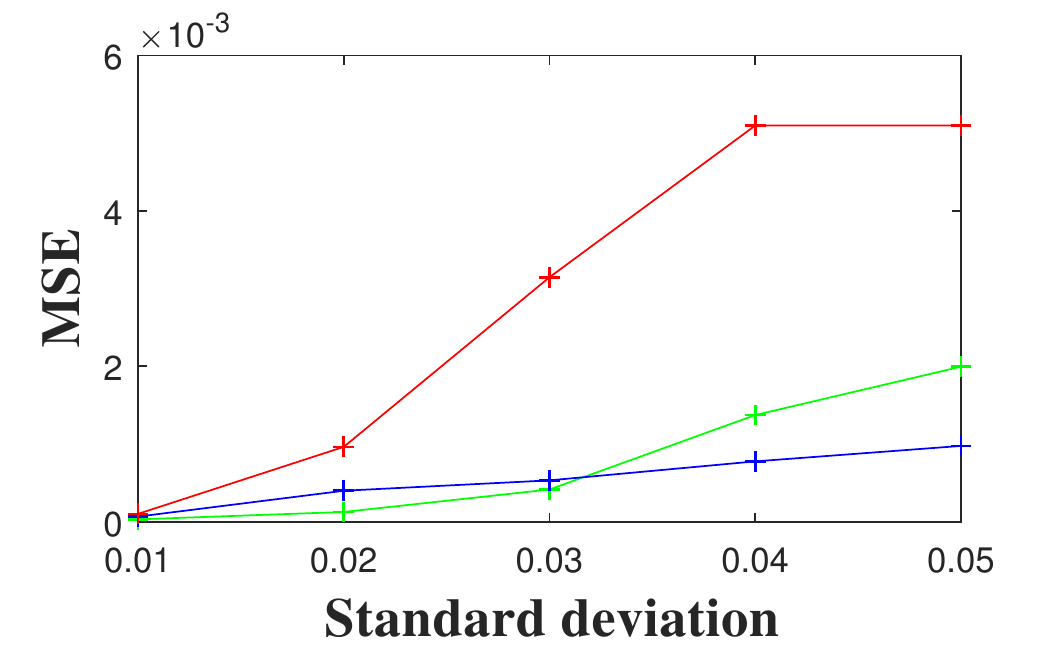}
		\includegraphics[height=4cm, width=7cm]{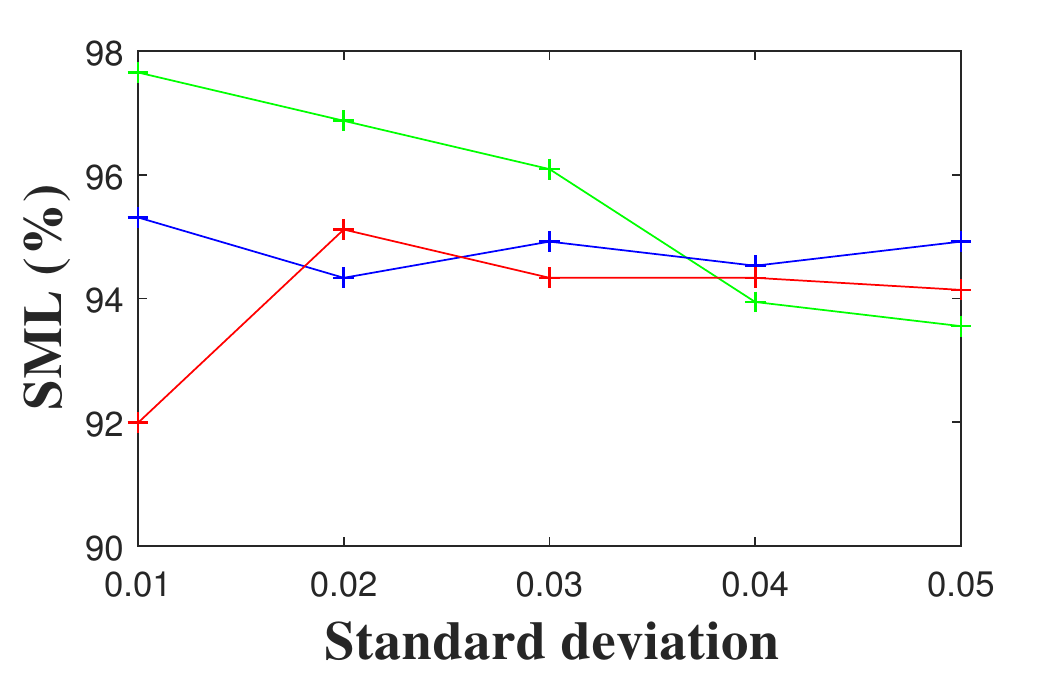}
		\normalsize\caption{Evaluation of AADMM (green), AMP (blue), and ICR (red) using different noise levels. MSE and SML are used to evaluate the performance of the algorithms over different noise levels.}
		\label{Or223}
	\end{figure}
	The figure shows that the recovered sparse signal using the AADMM algorithm is closer to the ground truth for lower noise levels. In terms of the SML,  AADMM  is more robust than AMP and ICR on  most of the noise levels. However, the use of high noise levels has a negative effect on the sparse signal recovery.
	
	Figure \ref{Or223345}  demonstrates full histograms of the mean squared errors (MSEs) and standard errors (SEs) for the AADMM, AMP, and ICR algorithms.
	\begin{figure}[H]
		\centering
		\includegraphics[height=3cm, width=7cm]{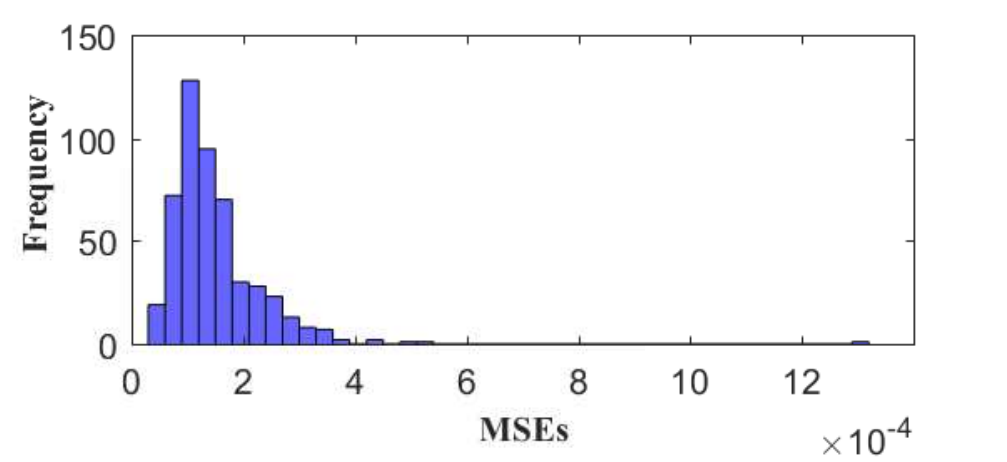}
		\includegraphics[height=3cm, width=7cm]{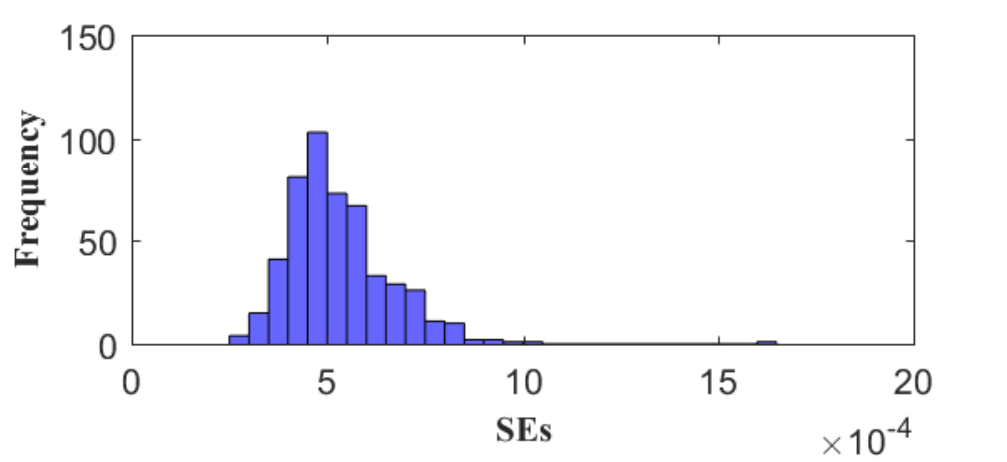}
		\includegraphics[height=3cm, width=7cm]{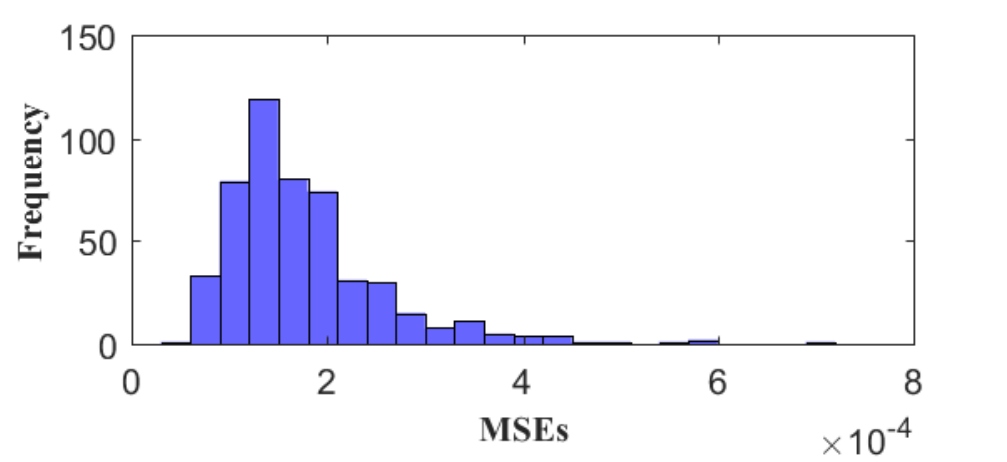}
		\includegraphics[height=3cm, width=7cm]{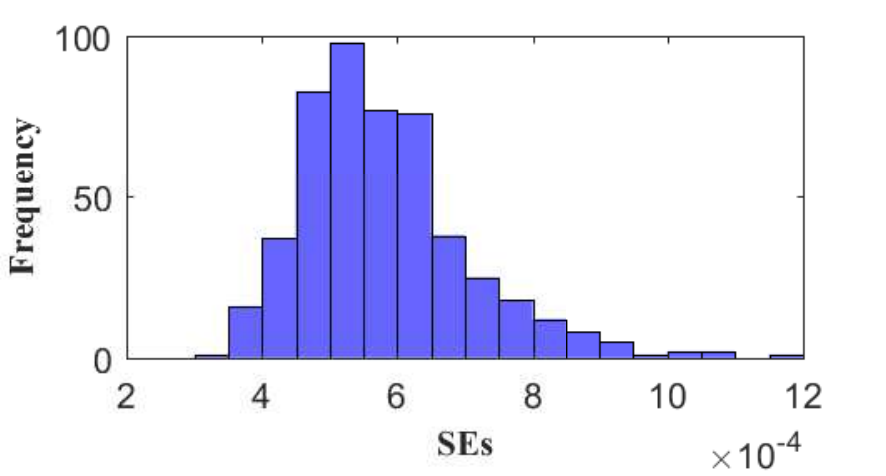}
		\includegraphics[height=3cm, width=7cm]{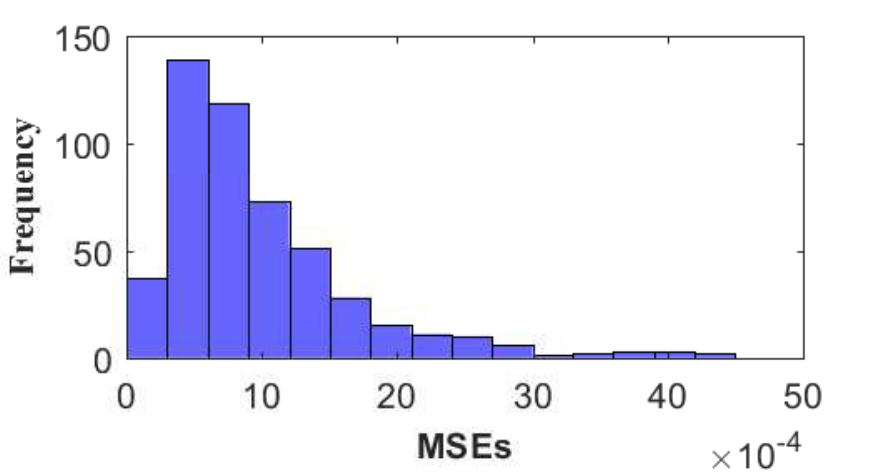}
		\includegraphics[height=3cm, width=7cm]{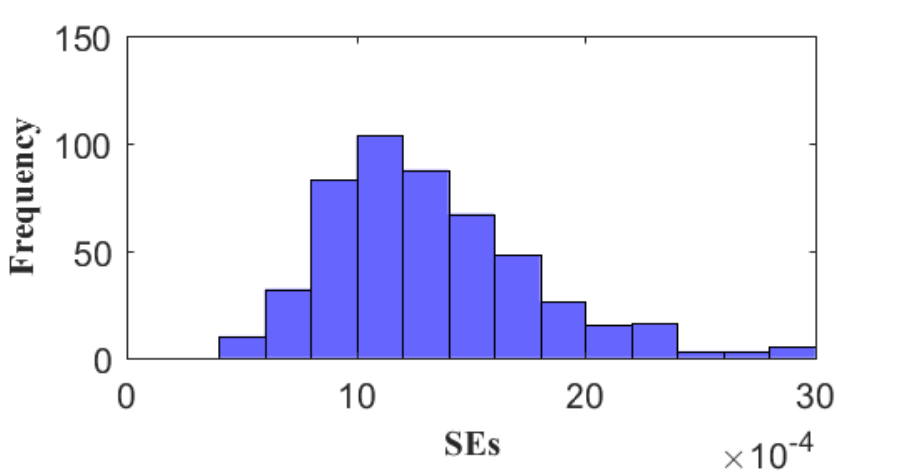}
		\normalsize\caption{Full histograms of MSEs and SEs for the AADMM  (top row), the AMP (middle row), and the ICR (bottom row) algorithms.}
		\label{Or223345}
	\end{figure}
The full histograms reveal that AADMM has better sparse signal recovery performance than AMP and ICR; note the differences in scales on the x-axis and the relative locations of the histogram bins. 
	
We  examined the convergence of the AADMM algorithm numerically for both unconstrained and constrained (or non-negative) sparse signal recovery.  Figure \ref{Or2267} shows the convergence of the AADMM algorithm.
	\begin{figure}[H]
		\centering
		\includegraphics[height=4cm, width=7cm]{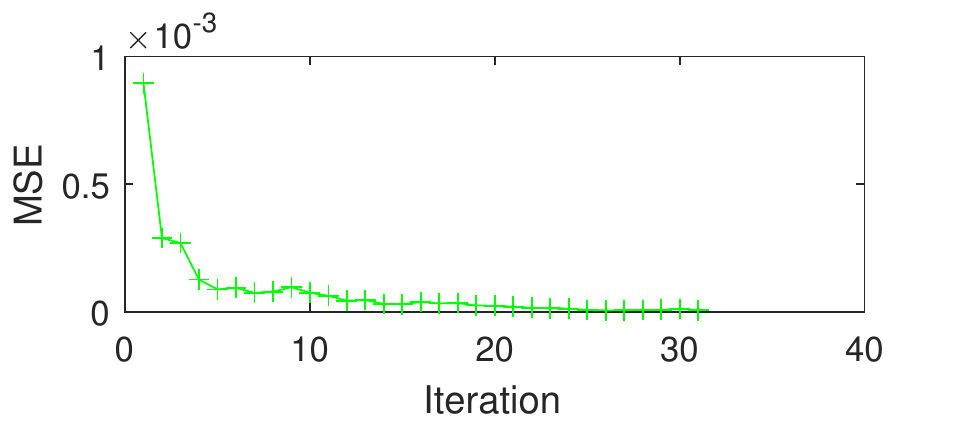}
		\includegraphics[height=4cm, width=7cm]{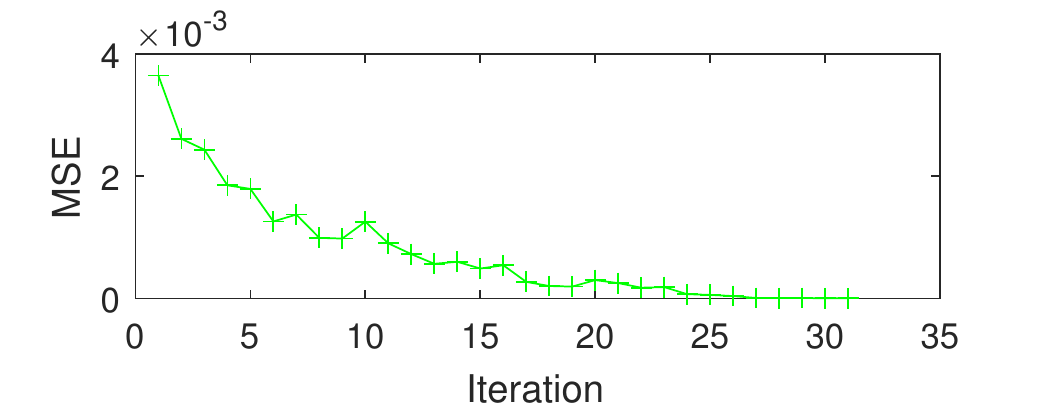}
		\normalsize\caption{Convergence of AADMM algorithm for unconstrained (left) and constrained (right) signal. MSE against the number of iterations is utilised to assess the convergence of the algorithm.}
		\label{Or2267}
	\end{figure}
	In addition, we have evaluated the effect of $\lambda$ on the sparse signal reconstruction performance of our algorithm, see Figure \ref{Or226rte7}.
	\begin{figure}[H]
		\centering
		\includegraphics[height=4cm, width=7cm]{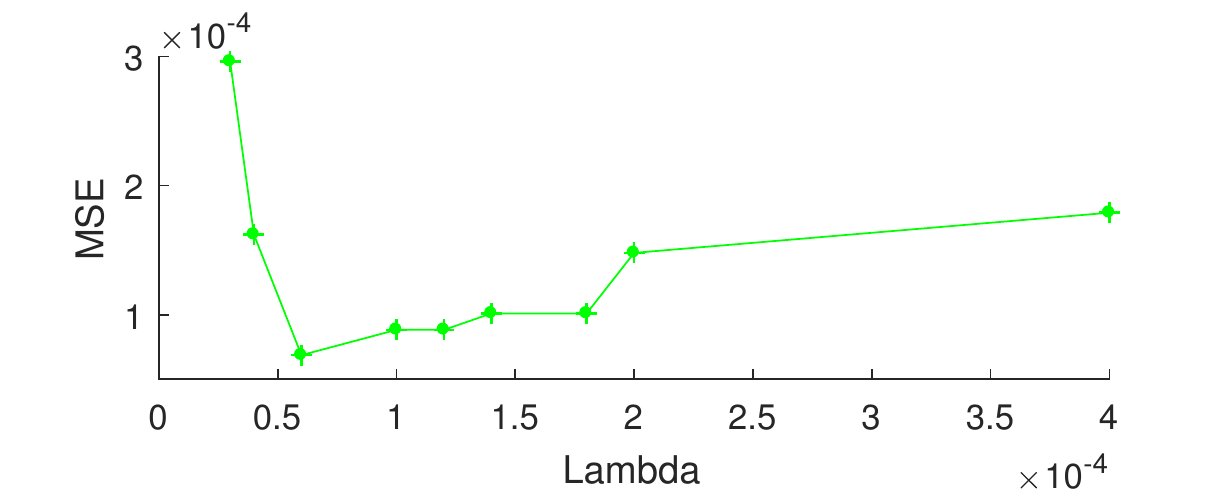}
		\includegraphics[height=4cm, width=7cm]{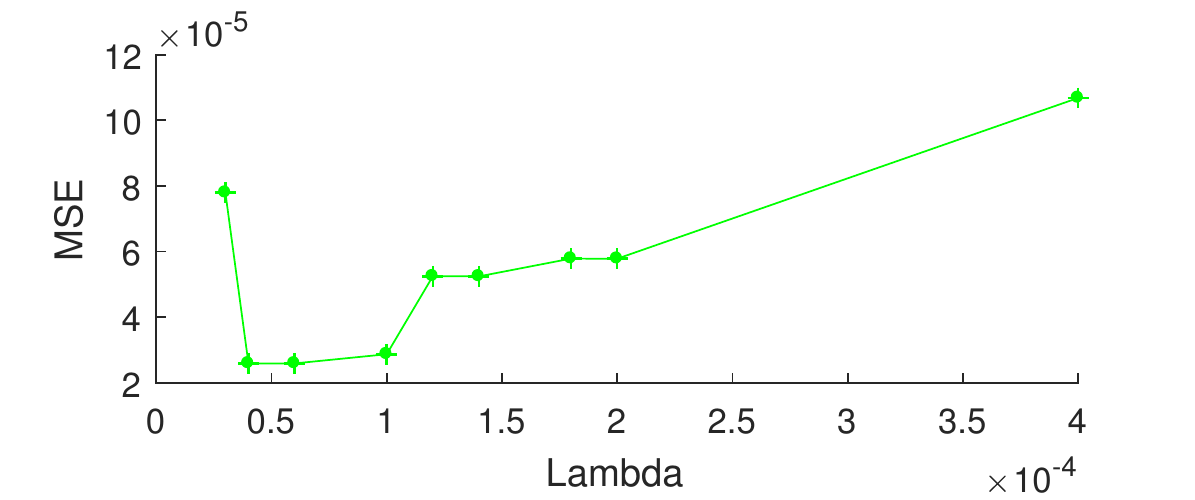}
		\normalsize\caption{Effect of $\lambda$ on the  sparse signal reconstruction performance of AADMM for unconstrained (left) and constrained (right) signal.}
		\label{Or226rte7}
	\end{figure}
	As expected, the  sparse signal reconstruction performance of AADMM depends on the regularizing parameter $\lambda$ and the reconstruction error can be minimized with a properly chosen $\lambda$. Note that we have compared the performance of ICR, AMP, and AADMM under parameter settings similar to those in \citep{YenT, BeckAandTeboulleM, MousaviHS,VuTH}.
	\subsection{Real image Recovery}
	We applied our algorithm to a signal with non-negative constraints. The non-negative constraint assumption on the signal allows us to explicitly enforce a non-negative constraint during the reconstruction of the signal. We utilised the algorithms on the well-known handwritten digit images MNIST \citep{LeCunY}. The digit images are real data, naturally sparse and fit into the spike and slab model. Each of the digit images (0 to 9) has a size of  28 $\times$ 28. The maximum and the minimum sparsity levels of the digit images are 200 and 96, respectively, which shows that the images are sparse.  We are interested to recover a sparse signal  $\mathbf{x} \in {\Bbb R}^{784\times 1}$ from undersampled random measurement $\mathbf{y}$.  For a Gaussian random matrix $\mathbf{A}$ and the maximum sparsity level 200, we can approximately determine the length of the random measurement $\mathbf{y}$ required for the successful recovery of a sparse signal $\mathbf{x}$ \citep{FoucartRauhut}.  Based on this, we randomly generated a  Gaussian random  matrix  $\mathbf{A}\in {\Bbb R}^{550\times 784}$ and an additive Gaussian noise  with  variance  $\sigma^2 = 3.24\times10^{-4}$ to obtain a random  measurement $\mathbf{y}$ according to equation \eqref{ABOq}.  Using $\lambda = 2\times10^{-4}$, the results obtained for the signal recovery problem (the images of the digits) and the evaluation of the algorithms can be found in Figure \ref{OrR2} and Table \ref{tab:AA223}, respectively. 
	\begin{figure}[H]
		\centering
		\includegraphics[height=10cm, width=15cm]{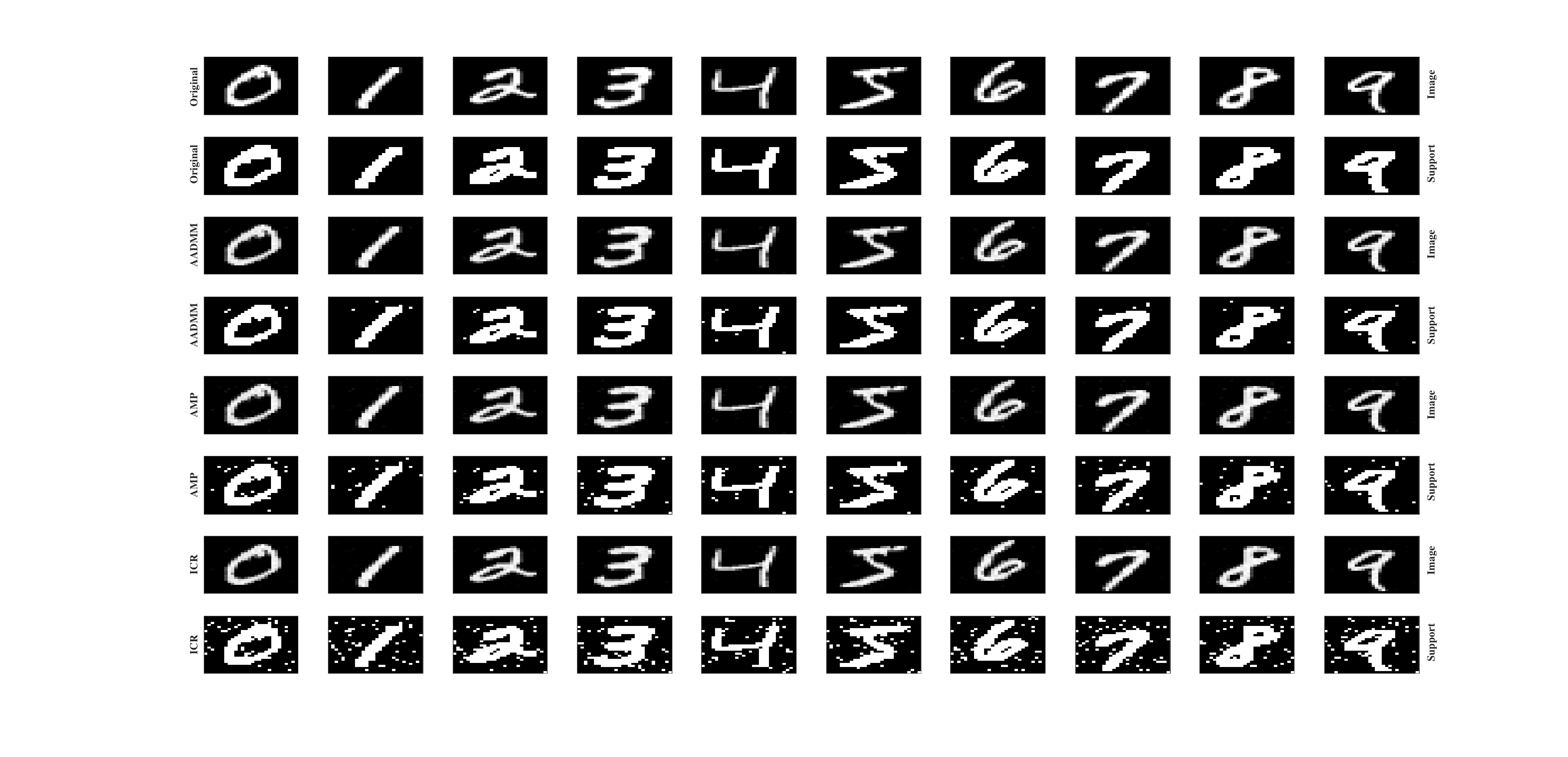}
		\vspace{-0.35in}
		\normalsize\caption{Sparse signal recovery of the real image using AADMM, AMP, and ICR. The first and the second rows show images of the original data and their supports. The third to the eighth rows present the recovered images and  their supports by AADMM, AMP, and  ICR, respectively.}
		\label{OrR2}
	\end{figure}
	\begin{table}[H]
		\centering
		\normalsize\caption{Evaluation of the algorithms for real image recovery. The averages of MSEs, SMLs, and CTs are employed to assess the performance of the algorithms for real image recovery.}
		\label{tab:AA223}
		\begin{tabular}{c|ccccc}
			\hline \hline
			Method&MSE&SML(\%)&CT (S)\\\hline
			AADMM&1.32$\times 10^{-4}$&98.495&1.020\\
			AMP&1.74$\times 10^{-4}$&97.360&1.228\\
			ICR &1.46$\times 10^{-4}$&94.133& 8.734\\\hline\hline
		\end{tabular}
	\end{table}
	
	Figure \ref{OrR2} shows the images of the original data and their supports (the first and the second rows), images of the recovered signal and their supports by AADMM (the third and the fourth rows), images of the  reconstructed signal and their supports by AMP (the fifth and the sixth rows), and images of the  reconstructed signal and their supports by ICR (the last two rows). The images of the supports give better information about the performance of the sparse signal recovery algorithms. Based on the images of the supports, it is clear that AADMM has better performance than AMP and ICR. Table \ref{tab:AA223} presents the average of MSEs, SMLs, and CTs of the reconstruction problem. It can be seen that AADMM outperforms AMP and ICR and it is also important to note that AADMM is faster than AMP and, in particular, much faster than ICR.
	%
	\section{Conclusions}\label{conc}
	In this paper, we have developed  an algorithm (AADMM) to optimize a hard non-convex optimization problem and applied it in sparse signal recovery. Unlike the recent algorithm (ICR), AADMM does not simplify the optimization by considering a history of
	solutions at previous iterations. That is, AADMM solves the problem in its general form.  The most recent AMP algorithm  can not be used to directly solve this problem, which means that AADMM is a more general problem solving algorithm.  Our evaluation of the algorithm on simulated data and real-world image data shows that AADMM has superior practical merit on ICR.  Further, AADMM has better performance than AMP, in particular with respect to obtaining sparser signal recovery. 
	
	Regarding future work, one part is to adopt the proposed AADMM algorithm into the compressive sensing MR image reconstruction framework, where the MR images are not sparse in themselves but sparse under a specific transformation. The key issue in MR image reconstruction is to obtain sparser recovery of MR images, which is the main reason for designing our algorithm. A further reason is to generalize the spike and slab prior to incorporate structural sparsity for sparse signal recovery and to develop its corresponding sparse signal recovery algorithm. 
	\section*{Acknowledgments}
	\noindent This work is supported by the Swedish Research Council grant (Reg. No. 340-2013-5342). We would like to thank three anonymous referees and the Editor for their detailed and insightful comments and suggestions that help to improve the quality of this paper. 
	\section*{Disclosure of Conflicts of Interest}
	The authors have no relevant conflicts of interest to disclose.
	\section*{References}
	\bibliographystyle{apalike}
	\bibliography{sample_4}
\end{document}